
\documentclass[letterpaper, 10 pt, conference]{ieeeconf}  

\IEEEoverridecommandlockouts                              

\overrideIEEEmargins                                      

\usepackage{graphics} 
\usepackage{epsfig} 
\usepackage{mathptmx} 
\usepackage{times} 
\usepackage{amsmath} 

\usepackage{amsthm}
\usepackage{amssymb}  
\usepackage{wasysym}
\usepackage{txfonts}
\usepackage{bbm}
\usepackage[singlelinecheck=false,font=footnotesize]{caption}
\usepackage{subcaption}
\usepackage{color}  
\usepackage[dvipsnames]{xcolor}
\usepackage{indentfirst}
\usepackage{multirow}

\usepackage{color, colortbl}

\usepackage{algorithm}
\usepackage{algpseudocode}
\usepackage[
    style=ieee,
    doi=false,
    isbn=false,
    url=false,
    eprint=false,
    backend=bibtex,
    natbib=true
    ]{biblatex}
    
\usepackage{hyperref}    

\pdfobjcompresslevel=0

\graphicspath{{Images/}}

\bibliography{references}




\newtheorem{defn}{Definition}
\newtheorem{rem}[defn]{Remark}
\newtheorem{lem}[defn]{Lemma}

\newtheorem{assum}[defn]{Assumption}

\providecommand{\R}{\ensuremath \mathbb{R}}
\providecommand{\N}{\ensuremath \mathbb{N}}

\providecommand{\X}{\ensuremath \mathcal{X}}
\renewcommand{\P}{\ensuremath \mathcal{P}}

\newcommand{\norm}[1]{\left\Vert#1\right\Vert}

\newcommand{\vep}{\varepsilon}

\newcommand{\inv}{^{-1}}

\newcommand{\bd}[1]{\partial #1}

\providecommand{\frs}{_\mathrm{FRS}}
\providecommand{\obs}{_\mathrm{obs}}

\providecommand{\stp}{_\mathrm{stop}}
\providecommand{\safe}{_\mathrm{safe}}

\newcommand{\ts}[1]{\textsuperscript{#1}}

\newcommand{\regtext}[1]{\mathrm{\textnormal{#1}}}

\newcommand{\hi}{_\regtext{hi}}
\newcommand{\hio}{_{\regtext{hi},0}}
\newcommand{\hii}{_{\regtext{hi},i}}

\newcommand{\plan}{_\regtext{plan}}
\newcommand{\sense}{_\regtext{sense}}

\newcommand{\des}{_\regtext{des}}



\newcommand{\vmax}{{v_\mathrm{max}}}

\newcommand{\al}{\alpha}

\definecolor{Gray}{gray}{0.9}
\newcolumntype{g}{>{\columncolor{Gray}}c}

\title{\LARGE \bf
Guaranteed Safe Reachability-based Trajectory Design for a High-Fidelity Model of an Autonomous Passenger Vehicle}
\author{Sean Vaskov\ts{1}, Utkarsh Sharma\ts{2}, Shreyas Kousik\ts{1},\\ Matthew Johnson-Roberson\ts{3}, Ramanarayan Vasudevan\ts{1}
\thanks{* This work is supported by the Ford Motor Company via the Ford-UM Alliance under award N022977.}
\thanks{$^{1}$Mechanical Engineering, University of Michigan, Ann Arbor, MI 48109
 {\tt\small <skvaskov,skousik,ramv>@umich.edu}}%
\thanks{$^{2}$Integrative Systems + Design, University of Michigan, Ann Arbor, MI 48109
        {\tt\small <utkrsh>@umich.edu}}%
\thanks{$^{3}$Naval Architecture and Marine Engineering, University of Michigan, Ann Arbor, MI 48109
        {\tt\small <mattjr>@umich.edu}}%
}
\begin{document}

\maketitle
\thispagestyle{empty}

\begin{abstract}
Trajectory planning is challenging for autonomous cars since they operate in unpredictable environments with limited sensor horizons.
To incorporate new information as it is sensed, planning is done in a loop, with the next plan being computed as the previous plan is executed.
The recent Reachability-based Trajectory Design (RTD) is a provably safe, real-time algorithm for trajectory planning.
RTD consists of an offline component, where a Forward Reachable Set (FRS) is computed for the vehicle tracking parameterized trajectories; and an online part, where the FRS is used to map obstacles to constraints for trajectory optimization in a provably-safe way.
In the literature, RTD has only been applied to small mobile robots.
The contribution of this work is applying RTD to a passenger vehicle in CarSim, with a full powertrain model, chassis and tire dynamics.
RTD produces safe trajectory plans with the vehicle traveling up to 15 m/s on a two-lane road, with randomly-placed obstacles only known to the vehicle when detected within its sensor horizon.
RTD is compared with a Nonlinear Model-Predictive Control (NMPC) and a Rapidly-exploring Random Tree (RRT) approach.
The experiment demonstrates RTD's ability to plan safe trajectories in real time, in contrast to the existing state-of-the-art approaches.
\end{abstract}
\section{Introduction}\label{sec:introduction}

\begin{figure}[t]
\centering
    \begin{subfigure}[t]{0.5\textwidth}
        \centering
        \includegraphics[width=0.9\columnwidth]{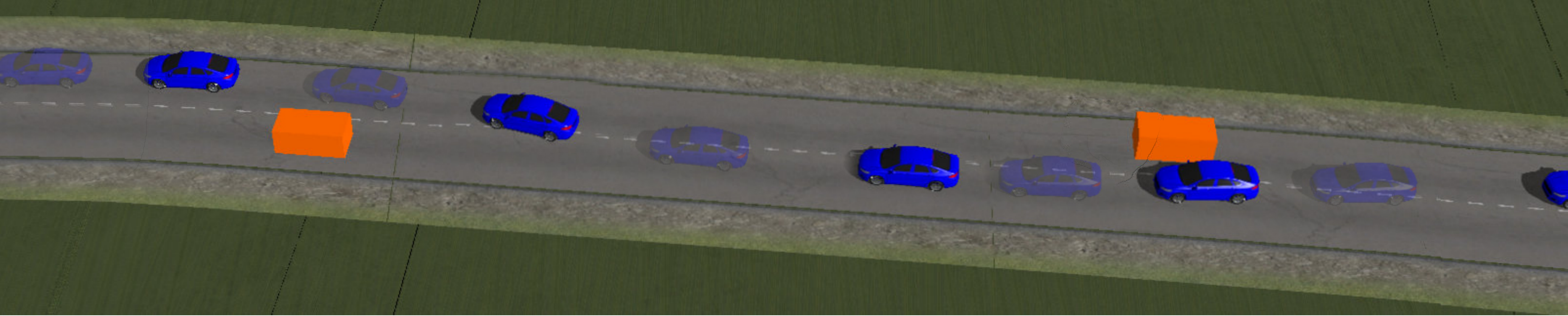}
        \caption{\centering}
        \label{subfig:frs_for_carsim_fusion}
    \end{subfigure}
    
    \begin{subfigure}[t]{0.5\textwidth}
        \centering
        \includegraphics[width=0.9\columnwidth]{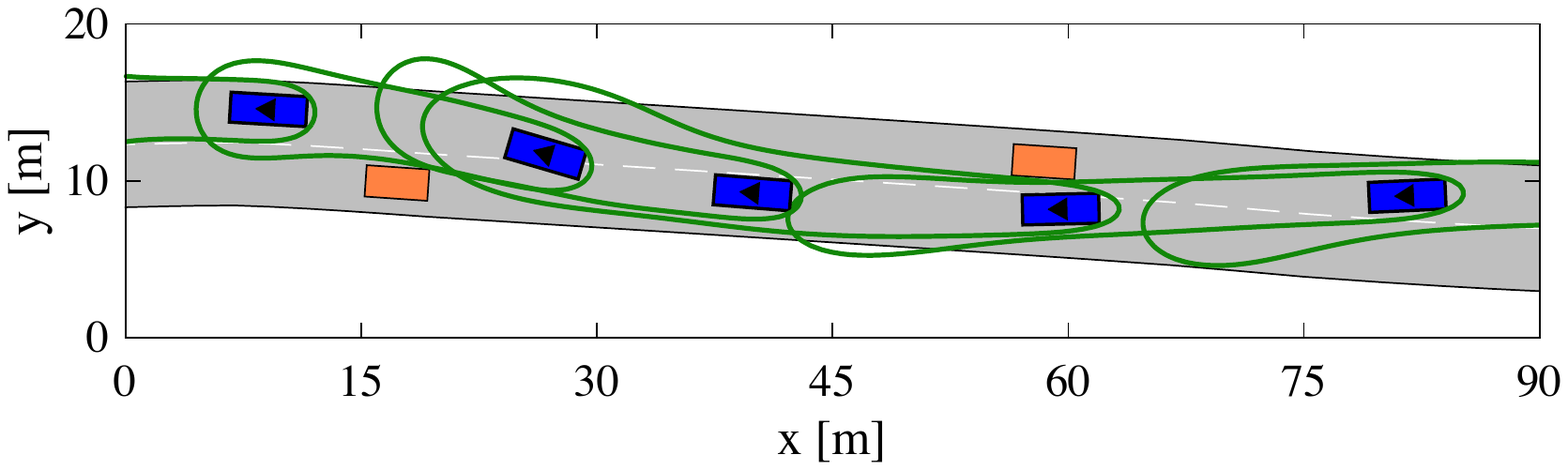}
        \caption{\centering}
        \label{subfig:carsim_fusion}
    \end{subfigure}
    \caption{    Simulation of the vehicle performing two lane change maneuvers on a $90$ m section of a $1$ km test track,beginning from the right side of the figure.
    It uses the presented RTD method to drive safely while avoiding randomly generated obstacles in real time.
    Subfigure (a) depicts the vehicle in Carsim; (b) depicts the RTD planner, which generates trajectories autonomously using an optimization toolbox in MATLAB. 
    Orange boxes are obstacles. 
    The vehicle is solid blue at the same time instances in both plots; in (a), the vehicle is transparent at intermediate times to illustrate its motion.
    The green contours in (b) represent the forward reachable set corresponding to the trajectory at each planning iteration.
    The vehicle begins the first lane change at approximately $15$ m/s, then slows down to approximately $8$ m/s while completing the second.
    A video is available at \protect\url{http://www.roahmlab.com/acc2019_rtd_video}.}
    \label{fig:carsim_and_matlab_intro}
\end{figure}

Autonomous vehicles typically operate with a limited sensor horizon in unpredictable environments.
To do so, they often employ a three-level hierarchy for receding-horizon motion planning, wherein a short trajectory is executed while the next trajectory is being planned \cite{buehler2009darpa,gray2012predictive,theta_star_rrt}.
In this hierarchy, the high-level planner provides coarse route information without considering vehicle dynamics.
The mid-level planner, or trajectory planner, creates dynamically-feasible trajectories and associated control inputs that guide the vehicle along the high-level planner's route while avoiding obstacles.
The low-level controller translates the mid-level planner's control inputs into commands for the vehicle's actuators, without considering the vehicle's surroundings.
To ensure safety, the trajectory planner must compensate for uncertainty in a vehicle's model, which can appear as state estimation error and tracking error between planned and executed trajectories.
Furthermore, the trajectory planner must be \emph{persistently feasible}, meaning that it is always able to find a new, safe trajectory while executing the previous trajectory.
This means that the trajectory planner must produce safe, dynamically-feasible trajectories in real time.
This is challenging because the dynamics of a vehicle are typically nonlinear and high-dimensional \cite{carsim, orosznonholonomic, kuwata2009rrt}.
In this work, we apply a recent proposed trajectory planner that is provably safe and persistently feasible, called Reachability-based Trajectory Design (RTD) \cite{RTD_big_paper}, to a passenger vehicle as depicted in Figure \ref{fig:carsim_and_matlab_intro}.
In this work, we only consider static obstacles; dynamic obstacles are left as future work.

\subsection{Literature Review}\label{subsec:lit_review}
A variety of approaches have been proposed to attempt safe, persistently-feasible trajectory planning.
We briefly review existing trajectory planning methods, which can be broadly divided into sampling-based, model predictive control, and reachability-based approaches.

Sampling-based approaches operate by discretizing the vehicle's state or control space and time to find trajectories that follow high-level routes \cite{elbanhawi2014sampling}.
For example, the Rapidly-exploring Random Tree (RRT) and Probabilistic RoadMaps (PRM) algorithms attempt to find dynamically-feasible trajectories by growing graphs of nodes in the vehicle's state space, with edges between the nodes associated with control inputs; since vehicles have complex dynamic models, new nodes for such systems are created by forward-integration of the dynamics \cite{karaman2011sampling}.
However, it is difficult to guarantee safety of these approaches for two reasons.
First, they typically only track a vehicle's center of mass (as opposed to the vehicle's body), which can make collision checking for a vehicle's entire body challenging; and second, for complex nonlinear models, one must typically specify a temporal and state space discretization granularity for collision checking \cite{elbanhawi2014sampling}.
Importantly, the finer the discretization, the slower a collision checker runs; i.e. there is a tradeoff between speed and safety.
To encourage safety and persistent feasibility, one can plan a braking trajectory at the same time as a non-braking trajectory (i.e. one that attempts to satisfy the high-level plan) \cite{kuwata2009rrt}; or use an Extended Kalman Filter when propagating an RRT to compensate for a vehicle's inability to perfectly follow a planned trajectory \cite{pepy2006rrt}.
However, to the best of our knowledge, no sampling-based method has been proposed that is provably safe and persistently feasible.

Model-predictive control (MPC) trajectory planners formulate an optimization program over the vehicle's control inputs over a short time horizon by treating the vehicle's dynamics and environment as constraints.
These approaches typically plan around a reference trajectory, which must be known a priori or generated at runtime; without a reference trajectory, MPC for nonlinear systems and non-convex constraints is typically too slow for real-time application 
\cite{katrakazas2015_motionplanning}.
Depending on the reference trajectory, the MPC problem may be infeasible, but it is unclear how best to generate the reference in arbitrary scenarios.
In structured scenarios, objects such as a road centerline can used as a reference \cite{Frash2013_ACADO_MPC}.
To compensate for uncertainty, Robust MPC has been proposed for linear systems, but vehicle dynamics are nonlinear \cite{Gao2014_robustMPC,orosznonholonomic}.
Methods have also been proposed to ensure \emph{recursive} feasibility for MPC, meaning a solution is available at every planning iteration, for linear systems \cite{lofberg2012mpc} and for nonlinear systems either with two states in discrete time \cite{strief2014robustmpc} or with up to five states with slow dynamics \cite{ma2012robustmpc}; each of these has a tradeoff between discretization granularity and solve time.
For nonlinear systems, MPC approaches typically approximate the dynamics by linearization, or with polynomials \cite{gpopsii}.
To solve the trajectory planning problem, MPC requires discretizing the vehicle's control inputs and trajectory so that each discrete point can be treated as a decision variable \cite{katrakazas2015_motionplanning}.
However, to the best of our knowledge, there is no provable method for discretization to ensure the MPC solution compensates for tracking error and obstacle avoidance.

Reachability-based approaches attempt to ensure safety and persistent feasibility by computing a reachable set to capture a family of possible trajectories given a model of the vehicle's uncertainty.
One such technique, the funnel libraries approach, precomputes reachable ``funnels'' of the vehicle's tracking error around a pre-defined library of trajectories, then links these funnels end-to-end to perform trajectory planning at runtime \cite{majumdar2016funnel}.
Recently, a Hamilton-Jacobi based approach was proposed to compute the reachable set of a vehicle's tracking error to produce lookup table of controllers that bound the tracking error of the vehicle at runtime; this approach requires that the vehicle tracks a reference trajectory produced by, e.g. RRT or MPC \cite{herbert2017fastrack}.
Sums-of-Squares (SOS) programming and control barrier functions have been similarly applied to bound tracking error to ensure safety about a reference trajectory while enabling real-time planning \cite{singh2018sosrealtime,chen2018obstacle}.
For each of these methods, it is unclear how to either ensure that a solution remains persistently feasible or how to represent obstacles so that collision checking operates in real time without sacrificing safety. 
For example, intersecting reachable sets with obstacles represented as semi-algebraic sets can be too slow in practice \cite[Section 6.1]{RTD_big_paper}.
The aforementioned RTD approach achieves safety by computing a Forward Reachable Set (FRS) that includes the vehicle's tracking error and ensures that the vehicle always has a braking trajectory available \cite{kousik2017safe}.
RTD is persistently feasible because it enables the user to enforce a timeout on the online trajectory planning without sacrificing safety; and, real-time planning is possible with a prescription for how to discretize obstacles without losing safety guarantees \cite{RTD_big_paper}.
However, thus far, RTD has only been applied to small mobile robots.

The contribution of this work is demonstrating that RTD is applicable to passenger vehicles with nonlinear dynamics describing the powertrain, chassis, and tires.
We compute an FRS for such a car simulated in CarSim \cite{carsim}, and use RTD to perform safe and persistently feasible maneuvers around static obstacles, as depicted in Figure \ref{fig:carsim_and_matlab_intro}.

\subsection{RTD Overview and Paper Organization}\label{subsec:rtd_overview}

RTD uses a high-fidelity model of the vehicle (Section \ref{sec:high-fidelity_model}) to track desired trajectories in a lower-dimensional subspace (Section \ref{sec:traj-prod_and_track_models}).
An FRS is computed for trajectories of the high-fidelity model in the lower-dimensional subspace by accounting for tracking error (Section \ref{sec:FRS_computation}).
To enable real time operation, obstacles are represented as discretized, finite sets, while still ensuring safety; and to ensure persistent feasibility, specifications are placed on the braking behavior of the vehicle (Section \ref{sec:safety_and_pers_feas}).
Finally, the FRS is used at run-time to map obstacles to constraints for an online optimization step that ensures RTD can only pick safe trajectory plans in a receding horizon fashion (Section \ref{sec:online_planning}).
The method is applied to a CarSim vehicle model (Section \ref{sec:sim_results}).

\subsection{Notation}\label{subsec:notation}

For a set $A$, its boundary is $\partial A$ and its complement is $A^C$, its interior is $\text{int}(A)$, and its power set is $\P(A)$.
The degree of a polynomial is the degree of its largest multinomial; the degree of the multinomial $x^\alpha,\,\alpha\in \N$ is $|\alpha|=\|\alpha\|_1$.
The set $\R_{\geq 0}$ is $[0,\infty)$.
If $z$ is a state, then $\dot{z}$ is its time derivative.
Subscripts denote the index or subspace to which a state belongs.
\section{High-Fidelity Model}\label{sec:high-fidelity_model}

This paper implements RTD on a passenger car model in CarSim.
The inputs are throttle, steering wheel angle, and brake master cylinder pressure.
We say \emph{vehicle} to refer to the Carsim model.
A \emph{high-fidelity model} is used to predict the motion of the vehicle and design a trajectory tracking controller.
Denote the state of the high-fidelity model as $z\hi \in Z\hi \subset \R^{n\hi}$, with dynamics $\dot{z}\hi: [0,T]\times Z\hi \times U \to \R^{n\hi}$.
Initial conditions for these dynamics occupy the space $Z\hio \subseteq Z\hi$; $T$ is the \emph{time horizon} of each trajectory plan; and the control input is drawn from $U \subset \R^{n_U}$.
We use a bicycle model similar to \cite[(1)]{liniger2015optimization} as the high-fidelity model:
\begin{align} \label{eq:high-fidelity_model}
   \dot{z}\hi= \frac{d}{dt}\begin{bmatrix} x_c \\ y_c \\ x \\ y \\ \theta\\ v_x \\ v_y\\ \omega \end{bmatrix}=
   \begin{bmatrix}
   v_x\cos\theta-v_y\sin\theta\\
    v_x\sin\theta+v_y\cos\theta\\
    v_x\cos\theta-v_y\sin\theta - \omega(y - y_c) \\
    v_x\sin\theta+v_y\cos\theta + \omega(x - x_c) \\
    \omega\\
    \frac{1}{m}F_x-\frac{1}{m}F_{\regtext{f},y}\sin \delta+v_y\omega\\
     \frac{1}{m}F_{\regtext{f},y}\cos \delta+\frac{1}{m}F_{\regtext{r},y}-v_x\omega\\
     \frac{l_f}{I_z}F_{\regtext{f},y}\cos \delta -\frac{l_r}{I_z}F_{\regtext{r},y}
   \end{bmatrix},
\end{align}
where $x_c$ and $y_c$ are the position of the vehicle's center of mass; $x$ and $y$ are the position of any point on the vehicle's body; $\theta$ is the vehicle's heading in the global coordinate frame; $v_x$, $v_y$ are longitudinal and lateral speed of the center of mass; and $\omega$ is yaw rate.
The constants $m$, $I_z$, $l_f$, and $l_r$ are the vehicle's mass, yaw moment of inertia, distance from the front wheel to center of mass, and distance of the rear wheel to center of mass.
To identify the model parameters, the vehicle is run through a series of open-loop acceleration, and deceleration inputs.
We fit polynomials relating the throttle and brake inputs to the driving force, $F_x$, and we find a linear relationship between wheel angle, $\delta$, and steering wheel angle.
Cornering maneuvers produce data to fit a simplified Pajecka tire model \cite[(2a, 2b)]{liniger2015optimization} to the lateral tire forces, $F_{\regtext{f},y}$ and $F_{\regtext{r},y}$.
Since $F_x$, $F_{\regtext{f},y}$, and $F_{\regtext{r},y}$ are continuous, the dynamics \eqref{eq:high-fidelity_model} are continuous.

Recall that \eqref{eq:high-fidelity_model} cannot perfectly capture the motion of the vehicle.
However, since the time horizon $[0,T]$, is compact, we can bound prediction error as follows.
\begin{assum}\label{ass:predict_within_epsilon}
Future state predictions given by the high-fidelity model \eqref{eq:high-fidelity_model} predict each state of the vehicle within an error bound $\vep_i > 0$ for $i=1,\cdots,n\hi$ at each time $t \in [0,T]$.
\end{assum}
\noindent  By this assumption, the high-fidelity model lies within $\vep_x, \vep_y$ of the vehicle in its $x$ and $y$ coordinates, as required by \cite[Assumption 9]{RTD_big_paper}.
We simulate the high-fidelity model and compare its state to Carsim data to empirically find the error bounds: $|\vep|\leq [0.1,\, 0.1,\, 0.12,\, 0.15,\, 0.02,\, 0.4,\, 0.08,\, 0.05]^\top$ where $|\cdot|$ is taken elementwise.

Notice that the dynamics of all points on the vehicle's body are included in \eqref{eq:high-fidelity_model}.
This is because the vehicle has nonzero volume, so it is insufficient to only consider the center-of-mass dynamics for trajectory planning.
Approaches that only plan with the center of mass typically expand the size of obstacles in the vehicle's environment such that, if the center of mass does not lie within expanded obstacles, then no point on the vehicle's body can lie in the actual obstacle \cite{elbanhawi2014sampling}.
In such approaches either the vehicle's footprint is a disk, so that obstacles can be expanded uniformly; or, the obstacle representation requires tuning parameters in a trial-and-error fashion to compensate for the vehicle's shape and possible range of headings, making it hard to ensure safety.
In contrast, RTD directly addresses planning trajectories with the vehicle's entire body.

\begin{rem}\label{rem:X_and_X0_rigid_body}
The state space $Z\hi$ has a two-dimensional \emph{spatial subspace} $X \subset Z\hi$ with coordinates $x$ and $y$.
The vehicle has a rectangular \emph{footprint} $X_0 \subset X$ that represents all points on the vehicle's body at the beginning of each planning iteration.
The states $x_c$ and $y_c$ evolve in a \emph{center-of-mass subspace} $X_c \subset X$.
According to the dynamics of $x$ and $y$ in \eqref{eq:high-fidelity_model}, the vehicle's footprint acts as a rigid body \cite[Lecture 7]{dynamics_MIT_OCW}.
\end{rem}
 \section{Producing and Tracking Trajectories}\label{sec:traj-prod_and_track_models}

Since the high-fidelity vehicle model is nonlinear with saturating inputs, it is difficult to use for planning in real-time.
Instead, RTD plans \emph{desired trajectories} with a lower-dimensional \emph{trajectory-producing model}, which has \emph{shared states} $z \in Z \subset Z\hi$, where $\dim(Z) = n_Z < \dim(Z\hi)$.
The model includes \emph{trajectory parameters}, $k$, that are drawn from a \emph{parameter space}, $K$.
The trajectory-producing model produces desired trajectories with dynamics  $\dot{z}\des: [0,T]\times Z\times K \to \R^{n_Z}$ with a space $Z_0 \subset Z\hio$ of initial conditions.
We use the following trajectory-producing model:
\begin{align}
    \dot{z}\des(t,z(t),k) = \begin{bmatrix} \dot{x}(t) \\ \dot{y}(t) \end{bmatrix} &= \begin{bmatrix} k_2-k_1(y(t)-y_c(0))\\ v_y^*+k_1(x(t)-x_c(0)) \end{bmatrix}\label{eq:traj-producing_model}\\
    v_y^*&=k_1\left(l_r-\frac{m\,l_f}{C_r\,(l_r+l_f)}k_2^2\right)\label{eq:steady_state_vy},
\end{align}
where $z = [x,y]^\top$; $k_1$ (resp. $k_2$) specifies a constant desired yaw rate (resp. longitudinal speed); and $C_r$ is the rear cornering stiffness from the tire force model in \eqref{eq:high-fidelity_model}.
The lateral speed, $v_y^*$, is derived from steady-state, linear tire force assumptions \cite[Section 10.1.2]{schramm2014vehicle}.
Notice that \eqref{eq:traj-producing_model} only has the two states $x$ and $y$, i.e. the center of mass and heading dynamics are omitted.
This is because the desired trajectories of the states $x_c,\ y_c$, and $\theta$ are treated as functions of the parameters $k$ and time, which lets us compute explicit solutions for their trajectories.
Consequently, \eqref{eq:traj-producing_model} produces trajectories of the vehicle's entire footprint in $X$, with initial conditions anywhere in the footprint $X_0$.
So, the shared state subspace $Z \subset Z\hi$ is in fact the spatial subspace $X$.

We use the trajectory-producing model as follows.
For every trajectory parameter $k \in K$, the high-fidelity model generates a feedback controller $u_k: [0,T] \times Z\hi \to U$ that attempts to track the trajectory parameterized by $k$; to shorten vocabulary, when applying $u_k$, we say that the vehicle \emph{tracks $k$}.
In our case, the vehicle uses linear MPC to track trajectories, implemented with MATLAB's MPC toolbox.
MPC was chosen to incorporate input saturation and rate limits, but any feedback controller can be used with RTD.

We now address the fact that desired trajectories produced by \eqref{eq:traj-producing_model} are not necessarily dynamically feasible for the high-fidelity model \eqref{eq:high-fidelity_model}.
This is because the dynamics and dimension of the two models differ, and because state estimation error can accumulate as a trajectory is tracked using feedback.
We refer to the difference between the high-fidelity model and the trajectory-producing model as \emph{tracking error}.
We bound tracking error with a function $g: [0,T]\times Z\times K \to \R^{n_Z}$, and use it to create a \emph{trajectory-tracking model} that matches the desired trajectories to the high-fidelity model; this requires the following assumption.
\begin{assum}\label{ass:compact_sets_and_cont_dyn}
The spaces $Z\hi$, $Z\hio$, $U$, $Z$, $Z_0$, and $K$ are compact subsets of Euclidean space that admit semi-algebraic representations.
\end{assum}

\begin{figure}
    \centering
    \includegraphics[width=0.95\columnwidth]{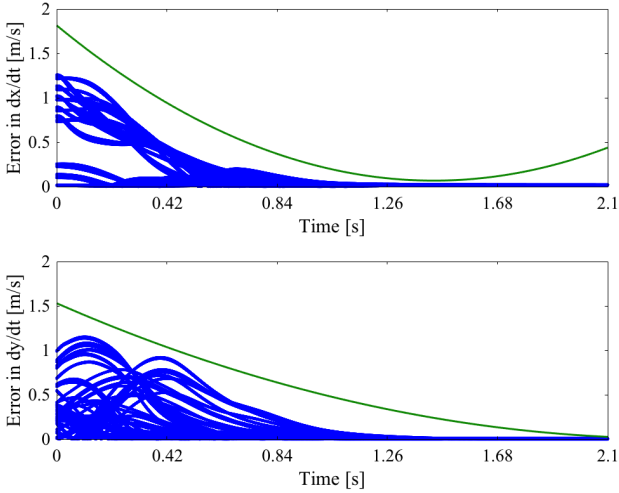}
    \caption{Example of tracking error plotted for a reference trajectory of $k_1=0$ rad/s and $k_2=12$ m/s and $T=2.1$ s.
    The top and bottom plots show the time derivative of absolute error in $x$ and $y$ states respectively.
    Data (blue) is from CarSim and captures initial velocities and yaw rates between  10.78 to 13.26 ms and -0.25 to 0.25 rad/s.
    The green lines are the error functions $g_x$ and $g_y$ as in \eqref{eq:g_tracking_error_defn}.}
    \label{fig:actual_data_g}
\end{figure}

Now, for the states $x$ and $y$ in \eqref{eq:traj-producing_model}, we find functions $g_x, g_y: [0,T]\times K \to \R_{\geq 0}$ such that
\begin{align}
   \max_{z\hi \in A\hi} |z\hii(t,z\hi,u_k) - z_{\regtext{des},i}(t,z,k)| \leq \textstyle\int_0^t g_i(\tau,k)d\tau,\label{eq:g_tracking_error_defn}
\end{align}
for all $t \in [0,T]$, $z \in Z$, and $k \in K$.
The subscript $i = x, y$ selects the corresponding components; the set $A\hi = \{z\hi \in Z\hi~\mid~z\hii = z_i~\regtext{for}~i = x, y\}$.
Arguments to $z\hi$, $u_k$, and $z$ are dropped to lighten notation.
The \emph{tracking error function} is $g = [g_x,g_y]^\top$.
In this work, $g_x$ and $g_y$ are polynomials of degree 2 that overapproximate tracking error data found by simulating the high-fidelity model tracking reference trajectories from a variety of initial conditions.
Importantly, state estimation error in $z\hio$ is added to the initial conditions, so $g$ conservatively approximates the prediction errors described in Assumption \ref{ass:predict_within_epsilon}.
Although not considered in this paper, state estimation error due to imperfect sensors and observer design can also be accounted for in $g$.
Figure \ref{fig:actual_data_g} shows data collected in CarSim of the vehicle tracking a reference trajectory of 12 m/s and 0 rad/s from initial velocities between 10.78 to 13.26 m/s and yaw rates between -0.25 and 0.25 rad/s, along with the computed error functions.
Constructing $g$ is not the focus of this work, but it can be conservatively approximated with, e.g., SOS programming \cite[Chapter 7]{lasserre2009moments}.

The tracking error function lets the trajectory-producing model ``match'' the high-fidelity model in the shared states:
\begin{lem}\label{lem:hi-fid_matches_traj-prod}
Let $L_d = L^1([0,T],[-1,1]^{n_Z})$ denote the space of absolutely integrable functions from $[0,T]$ to $[-1,1]^{n_Z}$ and recall that $n_Z = 2$ in \eqref{eq:traj-producing_model}.
Let $z\hi: [0,T] \to Z\hi$ denote a trajectory of \eqref{eq:high-fidelity_model} from arbitrary $z\hio \in Z\hio$ and tracking arbitrary $k \in K$.
Then, there exists $d \in L_d$ such that, almost everywhere $t \in [0,T]$,
\begin{align}\label{eq:error_equation}
    z\hii(t) = z_{\regtext{hi},0,i} + \textstyle\int_0^t \left(z_{\regtext{des},i}(\tau,z,k) + g_i(\tau,k)\cdot d_i(\tau)\, \right)d\tau
\end{align}
where $i = x, y$ selects each shared state in $Z \subset Z\hi$.
\end{lem}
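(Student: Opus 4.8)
The plan is to construct the disturbance signal $d$ by hand out of the instantaneous mismatch between the executed motion and the desired velocity, and then verify that this $d$ lies in $L_d$. As a preliminary, note that $z\hi(\cdot)$ is absolutely continuous on $[0,T]$: it is a trajectory of \eqref{eq:high-fidelity_model}, whose right-hand side is continuous in the state and input and is fed the bounded control $u_k$. Consequently each shared component satisfies $z\hii(t) = z_{\regtext{hi},0,i} + \int_0^t \dot z\hii(\tau)\,d\tau$ with $\dot z\hii$ defined almost everywhere, so proving \eqref{eq:error_equation} reduces to exhibiting, for $i = x,y$, a measurable $d_i : [0,T]\to[-1,1]$ with
\begin{align*}
  \dot z\hii(\tau) = z_{\regtext{des},i}(\tau,z,k) + g_i(\tau,k)\,d_i(\tau) \qquad \text{for a.e.\ } \tau\in[0,T].
\end{align*}

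I would then introduce the mismatch $r_i(\tau) := \dot z\hii(\tau) - z_{\regtext{des},i}(\tau,z,k)$ and set $d_i(\tau) := r_i(\tau)/g_i(\tau,k)$ where $g_i(\tau,k) > 0$ and $d_i(\tau) := 0$ where $g_i(\tau,k) = 0$. The displayed identity then holds by construction wherever $g_i(\cdot,k) > 0$, and it holds wherever $g_i(\cdot,k) = 0$ as soon as $r_i$ vanishes there; so everything rests on the pointwise estimate $|r_i(\tau)| \le g_i(\tau,k)$ for a.e.\ $\tau$, which at once yields $|d_i| \le 1$ and $r_i = 0$ on $\{g_i(\cdot,k) = 0\}$. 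I expect this to be the main obstacle. It is precisely the sense in which $g$ is built to dominate the tracking error: $g_i(\cdot,k)$ must over-approximate the instantaneous rate of change of the $i$-th component of the error between the high-fidelity and desired trajectories — this is how $g$ is fit to CarSim data in Figure~\ref{fig:actual_data_g} — and \eqref{eq:g_tracking_error_defn} is then the integrated corollary, obtained by integrating $|r_i| \le g_i$ from $0$ and using that $z\hi(\cdot)$ starts in $A\hi$, so its shared coordinates agree with the desired trajectory at $\tau = 0$. The care required is that the bound be invoked in this a.e.-pointwise form; the integrated inequality \eqref{eq:g_tracking_error_defn} alone does not control $r_i$ pointwise.

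Finally I would verify that $d = (d_x,d_y)$ is admissible. Each $r_i$ is measurable, being the difference of the a.e.\ derivative of an absolutely continuous function and a measurable function of $\tau$; $g_i(\cdot,k)$ is continuous; and $\{g_i(\cdot,k) = 0\}$ is measurable; hence each $d_i$ is measurable with $|d_i| \le 1$, and since $[0,T]$ has finite measure, $d \in L^1([0,T],[-1,1]^{n_Z}) = L_d$ (with $n_Z = 2$). Integrating $\dot z\hii(\tau) = z_{\regtext{des},i}(\tau,z,k) + g_i(\tau,k)\,d_i(\tau)$ over $[0,t]$ then gives \eqref{eq:error_equation}. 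Apart from the pointwise tracking-error estimate, the only delicate bookkeeping is the handling of the set on which $g_i$ vanishes.
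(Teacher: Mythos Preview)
Your proposal is correct and follows the same route as the paper: both construct $d_i$ as the pointwise ratio of the velocity mismatch $r_i$ to $g_i$ and then integrate. You are in fact more careful than the paper, which simply ``takes time derivatives'' of the integrated inequality \eqref{eq:g_tracking_error_defn} to obtain the pointwise bound $|r_i|\le g_i$---a step that does not follow from \eqref{eq:g_tracking_error_defn} alone but, as you correctly observe, is justified by the way $g$ is actually fit to bound the time derivative of the error (Figure~\ref{fig:actual_data_g}).
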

\begin{proof}
From \eqref{eq:g_tracking_error_defn}, $g$ bounds the maximum absolute error in $x$ and $y$ that can accumulate by any $t\in[0,T]$. 
Taking both the positive and negative case of the term inside the absolute value on the left hand side, then taking time derivatives, gives us the following inequalities: $\dot{z}\hii(t,z\hi,u_k) \leq \dot{z}_{\regtext{des},i}(t,z,k)+g_i(t,k)$ and
$\dot{z}\hii(t,z\hi,u_k) \geq \dot{z}_{\regtext{des},i}(t,z,k)-g_i(t,k)$.
Therefore, for all $t\in[0,T]$, we can pick $d_i(t)\in[-1,1]$ and integrate from $0$ to $t$ such that \eqref{eq:error_equation} is satisfied.
\end{proof}

We now define the \emph{trajectory-tracking model}, with state $z = [x,y]^\top \in Z$, as:
\begin{align}
    \dot{z}_i(t) = \dot{z}_{\regtext{des},i}(t,z(t),k) + g_i(t,k)\cdot d_i(t),\label{eq:traj-tracking_model}
\end{align}
where $i = x, y$ and $d = [d_x,d_y]^\top \in L_d$.
The utility of \eqref{eq:traj-tracking_model} is that, by Lemma \ref{lem:hi-fid_matches_traj-prod}, it can match any trajectory of the high-fidelity model in the shared states over the time horizon $[0,T]$; and, since the shared states occupy a lower-dimensional space than $Z\hi$, we can compute an FRS of the trajectory-tracking model.
\section{The Forward-Reachable Set}\label{sec:FRS_computation}
We now briefly discuss the FRS, detailed in \cite[Section 3]{RTD_big_paper}. The FRS is all points in $X$, and associated parameters in $K$, that are reachable by the trajectory-tracking model:
\begin{align}\begin{split}\label{eq:Xfrs}
    \X\frs = \{&(\hat{z},\hat{k}) \in X\times K~\mid~\exists~t \in [0,T],\ z_0 \in Z_0,\ d \in L_d\\
    &\regtext{s.t.}\ \dot{z}_i(\tau) = \dot{z}_{\regtext{des},i}(\tau,z(\tau),\hat{k}) + g_i(\tau,\hat{k})\cdot d_i(\tau),\\
    &z(0) = z_0,\ \regtext{and}\ z(t) = \hat{z} \},
\end{split}\end{align}
where $i = x, y$.
By Lemma \ref{lem:hi-fid_matches_traj-prod}, $\X\frs$ contains all points in $X$ that are reachable by the high-fidelity model tracking any trajectory parameterized by any $k \in K$.

\begin{rem}\label{rem:w_geq_1_in_Xfrs}
Recall that the dynamics $\dot{z}\des$ from \eqref{eq:traj-producing_model} and $g$ from \eqref{eq:g_tracking_error_defn} are polynomials.
Furthermore, the spaces $[0,T],\ Z$, and $K$ are compact and admit semi-algebraic representations by Assumption \ref{ass:compact_sets_and_cont_dyn} 
Therefore, by \cite[Lemma 14 and Remark 18]{RTD_big_paper}, we can use SOS programming to find a polynomial $w_\al: X\times K \to \R$ of degree $2\al \in \N$ for which
\begin{align}
   \X\frs \quad\subseteq\quad \{(z,k) \in X\times K~|~ w_\al(z,k) \geq 1\},
\end{align}
that is, the 1-superlevel set of $w_\al$ overapproximates the FRS.
\end{rem}
\noindent We use Remark \ref{rem:w_geq_1_in_Xfrs} to ensure safety and persistent feasibility of the vehicle in Section \ref{sec:safety_and_pers_feas}.
See \cite[Theorem 6]{majumdar2014convex} for a proof, and \cite[Program $(D^l)$]{RTD_big_paper} to compute $w_\al$. 

In Section \ref{sec:sim_results}, the vehicle will be run on a test track with a max speed of 15 m/s.
We compute six FRSes (i.e. six $w_\al$ polynomials) for commanded velocities of 3--5, 5--7, 7--9, 9--11, 11--13, and 13--15 m/s.
The vehicle only goes below 3 m/s when braking to a stop.
Each $w_\al$ has $\al = 6$.
We use polynomial error functions $g$ of degree 2; we fit a $g$ for each FRS.
Computing more FRSes with finer ranges of initial conditions and commands, would reduce conservatism; however, we found empirically that using 6 FRSes led to the vehicle completing the experiments of Section \ref{sec:sim_results}.
Any number of FRSes could be used as long as, together, they cover the vehicle's initial conditions.
Per \cite[Appendix 14.2]{RTD_big_paper}, at each planning iteration, the vehicle picks the FRS with the highest possible commanded speed that contains its current initial condition, which maintains the safety guarantee presented in the following section.
\section{Safety and Persistent Feasibility}\label{sec:safety_and_pers_feas}

We now define safety and persistent feasibility by using $w_\al$ from Remark \ref{rem:w_geq_1_in_Xfrs} to project the FRS into $X$ and $K$.
Note that we have computed multiple FRS's for separate speed ranges in Section \ref{sec:FRS_computation}; the material in this section holds for each FRS independently.

\subsection{Ensuring Safety}\label{subsec:obs_rep_for_safety}
We ensure safety by representing obstacles with discrete points in $X$ that become nonlinear constraints for the online trajectory planner.
This representation lets the trajectory planner run in real time \cite[Section 6.1]{RTD_big_paper}.
First, we specify how obstacle data must be received from sensors:

\begin{assum}\label{ass:obs_and_D_sense}
Obstacles, denoted $X\obs \subset X$, are closed polygons that are static with respect to time.
There are at most $n\obs \in \N$ obstacles within the vehicle's \emph{sensor horizon} distance $D\sense > 0$ at any time.
The vehicle senses all obstacles within $D\sense$ of its center of mass.
Obstacles do not appear spontaneously within the sensor horizon.
\end{assum}
\noindent Note that $X\obs$ can contain more than one disjoint polygonal obstacle.
This assumption is reasonable for obstacles represented by occupation grids or line segments fit to lidar data.
Occlusions can be treated as static obstacles. 

To relate obstacles to trajectory parameters, we define the \emph{FRS parameter projection map} $\pi_K: \P(X) \to \P(K)$ for which
\begin{align}
    \pi_K(X') = \{k \in K~\mid~\exists~z \in X'~\regtext{s.t.}~w_\al(z,k) \geq 1\}.\label{eq:pi_K}
\end{align}
Then, the safe set of parameters corresponding to an obstacle $X\obs \subset X$ is $K\safe = \pi_K(X\obs)^C$.
To conservatively approximate $K\safe$ in real time, we use the approach proposed in \cite[Section 6]{RTD_big_paper}, wherein $X\obs$ is \emph{buffered}, then its boundary is \emph{discretized}.
For a chosen buffer distance $b \geq 0$, the \emph{buffered obstacle} is:
\begin{align}
    X\obs^b = \left\{z \in X~\mid~\exists~z' \in X\obs~\regtext{s.t.}~\norm{z - z'}_2 \leq b\right\}. \label{eq:buffered_obs}
\end{align}
Since $X\obs$ is a polygon, the boundary of $X\obs^b$ consists of a finite number of line segments and circular arcs of radius $b$ \cite[Section 9.2]{minkowski_sum_fogel}.
Let $L = \{L_1,\cdots,L_{n_L}\}$ and $A = \{A_1,\cdots,A_{n_A}\}$ denote the sets of line segments and arcs, respectively, so that $\bd X\obs^b = \left(\bigcup_{i = 1}^{n_L}L_i\right) \cup \left(\bigcup_{i = 1}^{n_A}A_i\right)$.
It is shown in \cite[Section 6]{RTD_big_paper} that these line segments and arcs $\bd X\obs^b$ can be sampled to produce a finite \emph{discretized obstacle} $X_p \subset X$ that conservatively approximates the obstacle, i.e. $\pi_K(X_p)^C \subseteq K\safe$.
To understand why $X_p \subset \bd X\obs^b$, note that the spatial component of the vehicle's dynamics is continuous, so the vehicle cannot collide with the obstacle without passing through the obstacle's boundary first.
Example buffered and discretized obstacles are shown in Figure \ref{fig:braking}.

We construct $X_p$ as follows, summarizing \cite[Algorithm 1]{RTD_big_paper}.
Given a connected, compact curve $S: [0,1] \to \R^2$ and a distance $s > 0$, let \texttt{sample}$(S,s)$ return a (finite) set $P$ of points spaced along $S$ such that, for any point $p \in P$, there exists at least one other point $p' \in P$ no farther than $s$ away in the 2-norm, i.e. $\norm{p - p'}_2 \leq s$.
We also require that $S(0), S(1) \in P$, i.e. the ``endpoints'' of $S$ are in $P$.
Then,
\begin{align}
    X_p = \left(\bigcup_{i=1}^{n_L}\texttt{sample}(L_i,s_L)\right)\cup \left(\bigcup_{i=1}^{n_A}\texttt{sample}(A_i,s_A)\right),\label{eq:construct_X_p}
\end{align}
where $s_L > 0$ is the \emph{point spacing} and $s_A > 0$ is the \emph{arc point spacing}.
By construction, $X_p$ is a set of points that ``surround'' the obstacle.
So, to ensure safety, we must guarantee that no point on the vehicle's footprint can travel ``between'' any pair of adjacent points in $X_p$ farther than a chosen buffer distance $b$, otherwise the vehicle can collide with $X\obs$.
This means that $s_L$ and $s_A$ must be small enough that points in $X_p$ are close to each other in the 2-norm.
The values of $s_L$ and $s_A$ depend upon the shape of the vehicle's footprint $X_0$; which, in this work, is a rectangle (as per Remark \ref{rem:X_and_X0_rigid_body}).
We ensure safety with the following lemma that follows from \cite[Example 66 and Theorem 68]{RTD_big_paper}.

\begin{lem} \label{lem:X_p_is_safe}
Suppose $X\obs$ is a set of obstacles as in Assumption \ref{ass:obs_and_D_sense}, and $X_0$ has width $W$.
Pick a buffer distance $b \in (0,W/2)$.
Set $s_L = 2b$ and $s_A = 2b\sin(\pi/4)$.
Then, if $X_p$ is constructed as in \eqref{eq:construct_X_p}, the unsafe parameters corresponding to $X_p$ are a conservative approximation of the unsafe parameters corresponding to $X\obs$, i.e. $\pi_K(X_p) \supseteq \pi_K(X\obs) = K\safe^C$.
\end{lem}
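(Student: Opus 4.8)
The plan is to reduce the claim to a purely geometric statement about the footprint and then invoke the corresponding result of \cite{RTD_big_paper}. Note first that, since $K\safe = \pi_K(X\obs)^C$, the assertion $\pi_K(X_p) \supseteq \pi_K(X\obs)$ is equivalent to $\pi_K(X_p)^C \subseteq K\safe$: any trajectory parameter the discretized obstacle $X_p$ fails to flag is in fact collision-free with respect to the true obstacle set $X\obs$. By Remark \ref{rem:w_geq_1_in_Xfrs}, the $1$-superlevel set of $w_\al$ contains the full reachable set $\X\frs$, which by Lemma \ref{lem:hi-fid_matches_traj-prod} already accounts for tracking error. Hence it suffices to prove the following geometric fact: for every $k \in K$, if the $k$-slice of $\X\frs$ — the tracking-error-inflated region swept by the footprint $X_0$ while tracking $k$ — meets $X\obs$, then that same slice also contains a point of $X_p$; the containment $\X\frs \subseteq \{w_\al \geq 1\}$ together with the definition \eqref{eq:pi_K} of $\pi_K$ then places $k$ in $\pi_K(X_p)$. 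Throughout I exclude the trivial case $X_0 \cap X\obs^b \neq \emptyset$, in which every parameter is flagged.

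The geometric argument I would run rests on continuity and the rigid-body structure of the footprint (Remark \ref{rem:X_and_X0_rigid_body}). Each point of the footprint evolves by the continuous spatial dynamics of \eqref{eq:high-fidelity_model} (equivalently \eqref{eq:traj-tracking_model}), so the time-$t$ footprint varies continuously; it starts as $X_0$, disjoint from $X\obs^b$, and at some later time meets $X\obs \subseteq \text{int}(X\obs^b)$. Therefore the swept footprint first touches $\partial X\obs^b$ at some time, and by the time it reaches $X\obs$ it has passed continuously over a connected sub-arc of $\partial X\obs^b$. Now the buffer radius is what lets the sampled boundary suffice: the closed ball of radius $b$ about any contact point on $\partial X\obs$ lies inside $X\obs^b$, and because the footprint has width $W$ with $b \in (0,W/2)$, the swept footprint must in fact cover a piece of $\partial X\obs^b$ of $2$-norm diameter at least on the order of $2b$ — hence a sample point, since consecutive samples are spaced $s_L = 2b$ along the straight portions $L_i$ of $\partial X\obs^b$. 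At corners of the polygonal obstacle, $\partial X\obs^b$ contributes circular arcs of radius $b$; here $s_A = 2b\sin(\pi/4)$ is exactly the spacing that makes the chord between consecutive samples subtend at most a right angle, matching the right-angle exterior arcs that arise when a rectangular footprint interacts with a convex obstacle corner.

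The hard part will be making ``covers a piece of $\partial X\obs^b$ of diameter $\geq 2b$'' precise in the worst case: one must analyze every relative pose of the rectangular footprint and $\partial X\obs^b$ — corner-first versus edge-first crossings of the buffered boundary, and grazing contacts along the radius-$b$ arcs — and verify that the constants $2b$ and $2b\sin(\pi/4)$ are never beaten, which is precisely where $b < W/2$ is consumed. Rather than redo this case analysis, I would cite it: it is carried out for exactly a rectangular footprint of width $W$ in \cite[Example 66 and Theorem 68]{RTD_big_paper}, and it establishes that the spacings $s_L = 2b$ and $s_A = 2b\sin(\pi/4)$ are sufficient. Since in this paper $X_0$ is such a rectangle and $X_p$ is constructed with these spacings via \eqref{eq:construct_X_p}, the geometric fact holds; combined with $\X\frs \subseteq \{w_\al \geq 1\}$ as above, this yields $\pi_K(X_p) \supseteq \pi_K(X\obs) = K\safe^C$.
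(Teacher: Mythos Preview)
Your proposal is correct and aligns with the paper's approach: the paper does not give a proof at all but simply states that the lemma ``follows from \cite[Example 66 and Theorem 68]{RTD_big_paper},'' which is exactly the citation you invoke after your geometric reduction. Your sketch supplies the intuition (continuity of the footprint motion, first-crossing of $\partial X\obs^b$, and why $b < W/2$ forces coverage of a sample point) that the paper omits, but the substantive step in both is the appeal to the rectangular-footprint case analysis in the cited reference.
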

\noindent By Lemma \ref{lem:X_p_is_safe}, the vehicle is safe over a time horizon $[0,T]$, i.e. the duration of a single planned trajectory, when tracking any $k \in \pi_K(X_p)^C$.
For the vehicle, we pick $b = 0.05$ m, so $s_L = 0.1$ m and $s_A = 0.07$ m.

\subsection{Ensuring Persistent Feasibility}\label{subsec:pers_feas}

\begin{figure}
    \centering
    \includegraphics[width=1.0\columnwidth]{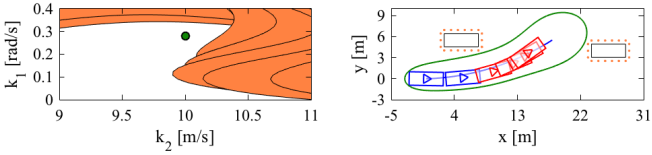}
    \caption{Examples of safe braking along a trajectory, of discretized obstacles, and of the maps $\pi_K$ and $\pi_X$. 
    The left subplot is $K$.
    The green circle indicates the parameters for the selected trajectory, $k = (0.28~\mathrm{rad/s},\ 10~\mathrm{m/s})$.
    The filled orange contours are the parameter projection map, $\pi_K(X_p)$, for the orange obstacle points in the right subplot.
    The right subplot is $X$.
    The initial speed and yaw rate of the vehicle are $0.027$ rad/s and $11.1$ m/s.
    The vehicle's pose, taken from CarSim data, is plotted every 0.5 s; it is blue when initially tracking the trajectory, and red when braking.
    The green contour is the spatial projection map of the forward reachable set, $\pi_X(k)$.
    Notice that the FRS does not intersect the obstacles, meaning that the chosen $k$ is safe for the actual vehicle to track.}
    \label{fig:braking}
\end{figure}

We now ensure the vehicle is able to always find a safe trajectory while planning with a receding-horizon strategy by specifying a minimum duration $T$ for each planned trajectory and a minimum sensor horizon $D\sense$.
\begin{rem}
We assume that sensor data is processed and passed to the trajectory planner instantaneously.
Then, trajectory planning is limited to a duration $\tau\plan \in (0,T)$ every planning iteration; a new $k \in K$ is found every $\tau\plan$ seconds, otherwise the vehicle begins braking.
Note that, in practical applications, $\tau\plan$ can be increased to include the time it takes to process sensor data.
\end{rem}

\noindent In this work, $\tau\plan = 0.5$ seconds.
We now consider how the vehicle must brake to be safe.
To understand how the vehicle tracks parameters, we define the \emph{FRS spatial projection map} $\pi_X: \P(K) \to \P(X)$ for which
\begin{align}
    \pi_X(K') = \{z \in X~\mid~\exists~k \in K'~\regtext{s.t.}~w_\al(z,k) \geq 1\}\label{eq:pi_X}.
\end{align}
By Remark \ref{rem:w_geq_1_in_Xfrs}, for any $k \in K$, $\pi_X(k) \subset X$ contains all points in $X$ reachable by any point on the vehicle's body, with dynamics \eqref{eq:high-fidelity_model}, over the time horizon $[0,T]$.
Now, we use $\pi_X$ to define safe braking behavior.
\begin{assum}\label{ass:brake}
While tracking any $k \in K$ over the time horizon $t \in [0,T]$, if the vehicle begins braking at $t = \tau\plan$, then all points on the vehicle's body lie within $\pi_X(k)$, the spatial projection of the parameter $k$.
\end{assum}
\noindent In other words, the FRS is large enough that the vehicle can brake within it for any trajectory parameter; as discussed next, this requires choosing $T$ so the vehicle can satisfy Assumption \ref{ass:brake} as shown in Figure \ref{fig:braking}.

To ensure Assumption \ref{ass:brake} can be fulfilled, the time horizon $T$ must be large enough that, when tracking any $k \in K$, the vehicle travels farther than the maximum braking distance it can achieve by braking at $\tau\plan$ from any state $z\hi \in Z\hi$ that results from tracking $k$.
Notice that \eqref{eq:traj-producing_model} creates trajectories that maintain a fixed speed over $[0,T]$.
Furthermore, the vehicle has a maximum stopping distance of $D\stp$ at the max speed considered in the FRS; and, the vehicle's stopping distance increases with the square of its speed.
Therefore, by \cite[Remark 73]{RTD_big_paper}, picking
\begin{align}
    T \geq \tau\plan + D\stp / \vmax
\end{align}
ensures that the trajectories in the FRS are long enough to satisfy Assumption \ref{ass:brake} \cite[Appendix 14]{RTD_big_paper}.
For example, for an FRS with $\vmax=11$ and $\tau\plan=0.5$ s, $D\stp=15.4$ m and $T=1.9$ s.
Note that this assumes no delay in braking actuation; if there is delay, then $T$ must be increased to include it.
Also, recall that, since we compute multiple FRSes, the max speed of each FRS determines its time horizon $T$.

Finally, to ensure persistent feasibility, recall by Remark \ref{ass:predict_within_epsilon} that the vehicle's state estimation error is bounded in $x$ and $y$ by $\vep_x$ and $\vep_y$ respectively.
To compensate for this error, we expand obstacles by $\vep_x$ in $x$ and $\vep_y$ in $y$ before creating the discretized obstacle with \eqref{eq:construct_X_p}.
Recall that state estimation error, and the resulting tracking error, is accounted for in $g$ as in \eqref{eq:g_tracking_error_defn}.
We conclude this section by specifying the sensor horizon required for persistent feasibility.

\begin{lem}\label{lem:T_sense_pers_feas}
\cite[Theorem 35]{RTD_big_paper}.
Let $X\obs \subset X$ be obstacles as in Assumption \ref{ass:obs_and_D_sense}.
Let $\vep = \sqrt{\vep_x^2 + \vep_y^2}$.
Let $\vmax$ denote the vehicle's max speed.
Suppose that the vehicle has known safe $k_0 \in K$ at $t = 0$.
Then, if the sensor horizon is
\begin{align}
    D\sense \geq (T+\tau\plan)\cdot\vmax + 2\vep,    
\end{align}
the vehicle can always either find a new trajectory parameter $k \in K$ or begin braking safely at every $t = j\tau\plan$ where $j \in \N$.
\end{lem}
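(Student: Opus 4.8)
The plan is to reduce the persistent-feasibility claim to a purely geometric statement about how far the vehicle can travel between re-planning events, then invoke the earlier safety and braking results to close the loop. Work inductively over the planning iterations $t = j\tau\plan$, $j \in \N$. The inductive hypothesis at iteration $j$ is: the vehicle is tracking a known-safe parameter $k_j \in K$ (with $k_0$ the given safe parameter at $t=0$), and every obstacle that the vehicle's body could possibly reach before the \emph{next} re-plan at $t=(j+1)\tau\plan$, plus braking from there, already lies within the sensor horizon at time $j\tau\plan$. The base case $j=0$ is exactly the hypothesis of the lemma together with Assumption \ref{ass:obs_and_D_sense} (obstacles do not appear spontaneously within $D\sense$).

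For the inductive step, I would argue as follows. During $[j\tau\plan, (j+1)\tau\plan]$ the vehicle tracks $k_j$, which is safe, so no collision occurs in that window by Lemma \ref{lem:X_p_is_safe}. At $t=(j+1)\tau\plan$ the planner has $\tau\plan$ seconds to search $K$. If it finds a new safe $k_{j+1}$ — safe meaning $k_{j+1} \in \pi_K(X_p)^C$ for the discretized, $\vep$-expanded obstacles currently sensed — the induction continues. If it fails, the vehicle begins braking; by Assumption \ref{ass:brake} the entire braking maneuver stays inside $\pi_X(k_j)$, and since $k_j$ was certified safe against all obstacles known at iteration $j$, the braking trajectory is collision-free \emph{provided} no newly-relevant obstacle has entered the picture. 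So the crux is a reachability/bookkeeping estimate: between the certification of $k_j$ at time $j\tau\plan$ and the completion of braking (which finishes no later than $j\tau\plan + \tau\plan + T$, since braking is planned as a trajectory of horizon $T$ per Section \ref{subsec:pers_feas}), the vehicle's center of mass moves at most $(T+\tau\plan)\vmax$, and the high-fidelity body stays within $\vep$ of the trajectory-tracking model's prediction in each of $x,y$, hence within $\vep = \sqrt{\vep_x^2+\vep_y^2}$ in Euclidean distance. I would then check that $D\sense \geq (T+\tau\plan)\vmax + 2\vep$ — one $\vep$ for the vehicle's own localization offset, one for the obstacle-expansion margin already folded into $X_p$ — guarantees that every obstacle capable of intersecting the relevant reachable region was within $D\sense$ of the center of mass at time $j\tau\plan$, so it was sensed and incorporated into the constraints that certified $k_j$ (or the failure that triggered braking). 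Combined with "obstacles do not appear spontaneously," this re-establishes the inductive hypothesis for $j+1$.

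The main obstacle I expect is getting the radius bookkeeping exactly right: one must carefully track \emph{which} point (center of mass vs.\ arbitrary body point) each distance is measured from, since $D\sense$ is defined relative to the center of mass (Assumption \ref{ass:obs_and_D_sense}) while collisions and the FRS projection $\pi_X$ concern the whole footprint, and the $\vep_x,\vep_y$ bounds from Assumption \ref{ass:predict_within_epsilon} enter both as a localization error (shifting where the center of mass actually is) and again through the obstacle expansion by $(\vep_x,\vep_y)$ described just before the lemma. Reconciling these so the two copies of $\vep$ in $D\sense \geq (T+\tau\plan)\vmax + 2\vep$ are each accounted for exactly once, and verifying that the $T \geq \tau\plan + D\stp/\vmax$ choice indeed makes the braking trajectory fit inside its planned horizon, is where the real work lies; everything else is the inductive skeleton above. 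Since this is \cite[Theorem 35]{RTD_big_paper}, I would ultimately cite that result for the detailed estimate and present here only the adaptation of its hypotheses to the vehicle model of Sections \ref{sec:high-fidelity_model}--\ref{sec:FRS_computation}.
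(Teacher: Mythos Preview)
Your proposal is correct and in fact goes beyond what the paper does: the paper gives no proof at all for this lemma, simply writing ``See \cite[Theorem 35]{RTD_big_paper} for the proof.'' Your inductive sketch over planning iterations, together with the distance bookkeeping for $(T+\tau\plan)\vmax + 2\vep$ and the final deferral to the cited theorem, is entirely consistent with---and more informative than---the paper's own treatment.
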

\noindent See \cite[Theorem 35]{RTD_big_paper} for the proof.
In this work, $D\sense \geq 42.4$ m; note that this is within the reported range of many commercial lidar units such as \cite{velodyne2007whitepaper}.
\section{Online Planning}\label{sec:online_planning}

We now apply RTD's online trajectory optimization (see \cite[Section 7 Algorithm 2]{RTD_big_paper}) to the vehicle.
Per Lemma \ref{lem:T_sense_pers_feas}, at each planning iteration, the vehicle either finds a new safe plan (i.e., picks a new $k \in K$) or begins braking safely.
First, we state how to find a new safe plan.

Let $w_{\al}$ be as in Remark \ref{rem:w_geq_1_in_Xfrs}.
Suppose the vehicle is at planning iteration $j \in \N$ and tracking the previous iteration's safe parameter $k_{j-1}$.
Suppose that $X\obs \subset X$ is an obstacle as in Assumption \ref{ass:obs_and_D_sense}, sensed as in Lemma \ref{lem:T_sense_pers_feas}.
Let $J: K \to \R$ be an arbitrary cost function.
Let $X_p$ be the discretized obstacle constructed as in Lemma \ref{lem:X_p_is_safe}.
We find $k_j$ with the program:
\begin{align}\label{prog:OptK}
    k_j = \underset{k}{\regtext{argmin}} \left\{ J(k)~\mid~w_{\al}(z,k) < 1 ~\forall~z\in X_p. \right\}
\end{align}
By Remark \ref{rem:w_geq_1_in_Xfrs}, the constraint $w_\al(z,k) < 1$ ensures that, for any feasible $k$, no point on the vehicle's body can reach any $z \in X_p$ at any $t \in [j\tau\plan,(j+1)\tau\plan]$.
By Lemma \ref{lem:X_p_is_safe}, this means that no point on the vehicle's body can reach $X\obs$.

In practice, feasible solutions to \eqref{prog:OptK} can be found quickly because $X_p$ becomes a finite list of point constraints on the decision variable $k$.
We implement \eqref{prog:OptK} with MATLAB's \texttt{fmincon} general nonlinear solver.
The cost function at each planning iteration is the vehicle's position and velocity relative to a desired waypoint and velocity, respectively, which are given by a high-level planner described in Section \ref{subsec:implementation}.
Recall that, by Assumption \ref{ass:brake}, if a new safe $k$ cannot be found within $\tau\plan$, then the car can always brake safely within the FRS of the previous plan.

\section{Simulation}\label{sec:sim_results}
This section compares RTD to Rapidly-Exploring Random Tree (RRT) and Nonlinear Model-Predictive Control (NMPC) trajectory planners.
The simulations are run on a 2.6 GHz computer with 128 GB RAM.
Planning times are reported using Matlab's \texttt{tic} and \texttt{toc} functions.
All planners use a receding horizon strategy where they must plan a new trajectory every $\tau\plan = 0.5$ s.
In the first experiment, all three planners are run with a real-time planning limit enforced.
In the second experiment, RRT and NMPC are given extra time.

The vehicle runs on a $1036$ m, counter-clockwise, closed loop test track with 7 turns (with approximate curvatures of 0.005--0.04 m$\inv$) and two $4$ m wide lanes.
Twenty stationary obstacles (with random length of 3.3--5.1 m length and width of 1.7--2.5 m) are distributed around the track in random lanes, with random longitudinal spacing of 40-55 m along the road.
Each obstacle is placed in the lane center with its heading in the direction of the lane.
We generated ten such random tracks; even though the mean obstacle spacing is the same, the tracks vary in difficulty.
For example, some tracks require the vehicle to perform overtaking maneuvers while cornering.
The vehicle begins each simulation at the northwest corner of the track in the left lane, with first obstacle at least 50 m away.
The obstacle spacing means the vehicle should be able to navigate every test track.
A road-block scenario, where RTD is forced to brake to a stop, is also shown in the video linked in \ref{subsec:results}.

A high-level planner places waypoints ahead of the vehicle at a \emph{lookahead distance} proportional to the vehicle's current speed.
If the lane centerline from the vehicle's current position and lane to the waypoint intersects an obstacle, the waypoint is switched to the other lane to encourage a lane change.
Lane keeping is not explicitly enforced but is encouraged via the cost function.
Each simulation is deemed successful if the vehicle completes one lap of the track.

\subsection{Trajectory Planner Implementations}\label{subsec:implementation}
RTD is implemented as discussed in Sections \ref{sec:high-fidelity_model}--\ref{sec:online_planning}.
Constraints in the online optimization program \eqref{prog:OptK} limit the commanded change in velocity to 1 m/s and yaw rate to less than 0.25 rad/s in each planning iteration.
These constraints create initial condition ranges over which the tracking error functions, described in Section \ref{sec:FRS_computation}, are valid.
FRSes can be computed for more aggressive maneuvers, but, even without them, the vehicle is able to successfully navigate the test track.
To keep the vehicle on the road, RTD buffers the road boundaries by 2.5 m (outside the road) and incorporates these buffers as obstacles.
Since the FRS includes the full vehicle body, this ensures that the vehicle's center of mass stays in the road boundaries when tracking any trajectory planned by RTD.

The RRT planner is implemented based on \cite{kuwata2009rrt}.
To compensate for the vehicle's footprint, obstacles are buffered by 4 m in length and 1.5 m in width.
New nodes are creating by first selecting a random existing node, then forward-integrating the vehicle's high-fidelity model with randomly-chosen control inputs held for 0.5 s.
This creates 50 points spaced 0.01 s apart; the last such point is the new node, which is discarded if any of the points leave the track or enter a buffered obstacle.
Two trees are built in parallel: one with throttle inputs, and one with braking inputs.
The cost at each node is the distance to the current waypoint, plus penalties for being near obstacles or road boundaries, and for commanding large control inputs.

The NMPC planner uses GPOPS-II, a commercially available pseudo-spectral nonlinear MPC solver \cite{gpopsii}.
GPOPS-II uses a kinematic bicycle model, similar to \cite[Section 9.2.3]{RTD_big_paper}, with acceleration and steering wheel angle rate as inputs.
Reducing the number of inputs and complexity of the dynamics was found to reduce solve time.
Obstacles are buffered by 4 m in length and 1.25 m in width. 
The track is discretized and represented as a set of adjacent rectangles.
Constraints are created as half-planes to ensure that the planned trajectory (of the center of mass) does not enter buffered obstacles or exit rectangles defining the track.
To reduce the number of constraints, only obstacles and road boundaries within the lookahead distance of the high-level planner are considered.

\subsection{Results}\label{subsec:results}
\begin{table}[t]
\begin{tabular}{|c|r|r|r|r|c|c|}
\hline
\multirow{2}{*}{Planner} & \multicolumn{2}{l|}{Planning Time  (s)} & \multicolumn{2}{l|}{\begin{tabular}[c]{@{}l@{}}\% of Track \\ Complete\end{tabular}} & \multirow{2}{*}{Crashes} & \multirow{2}{*}{\begin{tabular}[c]{@{}l@{}}Safe \\ Stops\end{tabular}} \\ \cline{2-5}
 & Avg & \multicolumn{1}{c|}{Max} & Avg & Max &  &  \\ \hline
RTD & \cellcolor{Gray}\textbf{0.09} &  \cellcolor{Gray}0.50 &  \cellcolor{Gray}\textbf{100} &  \cellcolor{Gray}100 &  \cellcolor{Gray}0 &  \cellcolor{Gray}0 \\ \hline
\multirow{2}{*}{RRT} & 5.00 & 5.00 & 31 & 86 & 0 & 10   \\ \cline{2-7} &  \cellcolor{Gray}0.50 &  \cellcolor{Gray}0.50 &  \cellcolor{Gray}13 &  \cellcolor{Gray}38 &  \cellcolor{Gray}1 &  \cellcolor{Gray}9
\\ \hline
\multirow{2}{*}{GPOPS}  & 3.71 & 72.58 & \textbf{100} & 100 & 0 & 0\\ \cline{2-7} &  \cellcolor{Gray}0.50 & \cellcolor{Gray} 0.50 & \cellcolor{Gray}0 &  \cellcolor{Gray}0 & \cellcolor{Gray}0 &  \cellcolor{Gray}10 
 \\ \hline
\end{tabular}
\caption{Simulation results comparing RTD, RRT, and GPOPS-II on 10 simulated tracks.
The first experiment, with the real-time planning limit, is shown in gray, and the second experiment in white.
The fourth and fifth columns show the average and max percent of each track completed.
The six and seventh columns count the number of crashes or safe stops if the vehicle did not complete the track.}
\label{tab:sim_results}
\vspace{-3mm}
\end{table}

Results are shown in Table \ref{tab:sim_results}.
A video of RTD planning in real time is available at \url{http://www.roahmlab.com/acc2019_rtd_video}.
In the first experiment, RTD successfully navigates the track in all 10 trials, with an average planning time of 0.086 s.
With the 0.5 s time limit, RRT on average, navigates 13\% of the track on average.
When the vehicle approaches obstacles, the planner struggles to generate feasible nodes that both avoid the obstacle and stay on the track.
Since the algorithm penalizes nodes near obstacles and plans a braking trajectory at each iteration, it is able to stop safely (without colliding with an obstacle) in 9 trials.
RRT has 1 crash because it cannot always generate a feasible braking trajectory.
Increasing the buffer size of the obstacles could reduce collisions, but would impact performance.
GPOPS-II is unable to plan trajectories in less than 0.5 seconds due to the number of track constraints; hence, it records 10 safe stops.

In the second experiment, the extended planning time allows RRT to generate  more nodes per planning iteration, and complete more of the track: 31\% on average, with a maximum of 86\%.
Although unable to reach the goal, RRT uses the extended planning time to find safe stopping paths.
GPOPS-II successfully reaches the goal in all 10 trials; however, it achieves an average planning time of 3.71 s.
The planning times for GPOPS-II have a standard deviation of 4.20 s; the large standard deviation is expected because the number of constraints vary based on the track curvature.
Heuristics may reduce the amount of constraints, but would be obstacle- or track-specific.
In contrast, the average planning time and standard deviation of RTD is 0.09 s and 0.06 s; hence we expect changes in the track will not affect its ability to perform in real time.
Additionally due to Lemma \ref{lem:T_sense_pers_feas}, when RTD is unable to plan a trajectory within the 0.5 s time limit, it is always able to safely brake.
\section{Conclusion}\label{sec:conclusion}

To design trajectories for autonomous cars while ensuring safety and persistent feasibility, one must have real-time performance despite model uncertainty and error in the vehicle's ability to track a planned trajectory.
In this work, we apply the Reachability-based Trajectory Design (RTD) method, which is provably safe and persistently feasible, to a full-sized passenger vehicle in CarSim.
RTD has been applied to small mobile robots in prior work; here, we demonstrate that the method can plan dynamically-feasible, safe trajectories for autonomous cars.
In ten simulated trials, RTD successfully drives the vehicle around an entire 1 km test track at up to 15 m/s around randomly generated obstacles (only known when detected at runtime) safely and in real-time.
Currently, RTD is limited to static obstacles, and requires that braking is implicitly included in the offline reachability computation.
Future work will address these limitations, apply RTD on a physical car, explore new online algorithms for globally-optimal planning, and address vehicle-specific types of uncertainty such as road friction.

\renewcommand{\bibfont}{\normalfont\small}
{\renewcommand{\markboth}[2]{}
\printbibliography}

\end{document}